\def\v #1{\vert #1\vert}             
\def\m #1 #2{(-1)^{{\v #1} {\v #2}}} 
\theoremstyle{plain}
\newtheorem{theorem}{Theorem}
\newtheorem{proposition}[theorem]{Proposition}
\newtheorem{example}{Example}
\theoremstyle{definition}
\newtheorem{definition}[theorem]{Definition}
\def\<#1>{\langle#1\rangle}
\begin{document}

\centerline{\Large \bf Geometry of the discrete Hamilton--Jacobi equation}\vskip 0.25cm
\centerline{\Large \bf  Applications in optimal control}

\medskip
\medskip

\centerline{M. de Le\'on and C.
Sard\'on}
\vskip 0.5cm
\centerline{Instituto de Ciencias Matem\'aticas, Campus Cantoblanco}\vskip 0.2cm
\centerline{Consejo Superior de Investigaciones Cient\'ificas}
\vskip 0.2cm
\centerline{C/ Nicol\'as Cabrera, 13--15, 28049, Madrid. SPAIN}

\medskip

\medskip

\begin{abstract}
   In this paper, we review the discrete Hamilton--Jacobi theory from a geometric point of view. 
In the discrete realm, the usual geometric interpretation of the Hamilton--Jacobi theory in terms
of vector fields is not straightforward. 

Here, we propose two alternative interpretations: one is the 
interpretation in terms of projective flows, the second is the temptative of constructing a discrete Hamiltonian vector field
renacting the usual continous interpretation.


Both interpretations are proven to be equivalent and applied in optimal control theory. 
The solutions achieved through both approaches are sorted out and compared by numerical computation. 

\end{abstract}

\section{Introduction}

The discretization of differential equations is efficient on frameworks
in which we cannot compute analytical solutions of the equation and numerical methods worked upon discretizations provide approximate solutions of our differential problem.

In recent years, there has been a growing effort to set proper discrete analogues of continuous models 
and design numerical methods to solve them.  
In this paper, we are interested in dynamical systems and optimal control problems endowed with a discrete Hamiltonian system.
Hence, numerical methods in geometric mechanics must preserve symplecticity since we work on a phase space, among some other restrictions.

The first inklings of discrete mechanics appeared in the realm of Lagrangian mechanics \cite{MarsWest}. 
The lack of a corresponding Hamiltonian theory lead to the development of discrete Hamiltonian mechanics. 
Since then, some works appeared on the discretization of Lagrangian and Hamiltonian systems on tangent and cotangent bundles, what lead
to variational principles for dynamical systems and principles of critical action on both the tangent and cotangent bundle \cite{GuiBloch,Marsden}. 
This gave rise to analogies between discrete and continuous symplectic forms, Legendre transformations, momentum maps and Noether's theorem.
The Hamiltonian side specially gave rise to optimal control problems by developing a discrete maximum principle that yields discrete necessary conditions for optimality.
Furthermore, discrete Hamiltonian theories have been particulary useful in distributed network optimization
and derivation of variational integrators \cite{Lall}. These constructions rely on numerical methods that do not only preserve symplecticity
but also the momentum map in the presence of symmetries. This is why the design of working numerical integrators is in vogue, since they do not necessarily
preserve conservation laws. 
The geometry of the space is also keypoint to perform better discretizations. For this matter, it is important to rely on symmetries and invariants of the geometric space.
For example, we examine conservation of energy, conservation of angular momentum, etc., when there exists a physical interpretation of the system under study.

In this work, we consider important to observe how objects differ from their continuous version if we implement a discretization of the system and how solutions are achieved by
minimazing the error in approximation. 

This is why we propose two different approaches for the same problem of obtaining a discrete, geometric Hamilton--Jacobi
theory. The passing from a continous to a discrete Hamilton--Jacobi theory is not straightforward as it might seem. Discrete
vector fields are new keypoint objects that need to be defined. 
Then, our outlook is twofold: on one hand, we propose a discrete geometric Hamilton--Jacobi theory 
interpreted in terms of discrete flows. This viewpoint
has not been devised in the literature before. On the other hand, we define a discrete Hamiltonian vector field and 
propose a Hamilton--Jacobi theory in terms of these discrete Hamiltonian vector fields.
Both approaches shall be used for the derivation of solutions of discrete Hamiltonians appearing in optimal control theories.

The goal is to reduce the amount of error derived from both approaches, to a level considered negligible for the modeling purposes at hand. 
Convergence between both approaches is numerically justified. In particular, it is shown how the second approach, or that of using 
a discrete vector field provides better approximations than the former discrete equation for the generating function.

So, the outline of the paper goes as follows: in Section 2, we review the common notation and fundamentals of classical continuous mechanics
and introduce paralell concepts on discrete mechanics briefly, alongside the continuous and discrete Hamilton--Jacobi equation.
Section 3 contains a discrete, geometric Hamilton--Jacobi theory that is twofold. First, we interpret
the Hamilton--Jacobi theory in terms of discrete flows, from
which we derive a discrete Hamilton--Jacobi equation. Second,
we propose an alternative discrete, geometric Hamilton--Jacobi theory in terms of a discrete Hamiltonian vector field.
Another discrete Hamilton--Jacobi equation is also derived. Both approaches are compared and proven equivalent.
Next, in Section 4 we propose a numerical example through a optimal control problem,  with which we show the convergence
between the two proposed methods and display the better outcome of the second proposal. 

To avoid mathematical conflict and without loss of generality, we assume all objects to be smooth and globally defined unless stated otherwise. Manifolds are connected and differentiable.

\section{Fundamentals}

\subsection{Continuous Mechanics}

\noindent
We consider the tangent bundle $TQ$ and the canonical projection $\tau_Q:TQ\rightarrow Q$.
A Lagrangian is a function $L:TQ\rightarrow \mathbb{R}$, where $L=L(q^i,\dot{q}^i)$ with
$(q^i)$ being coordinates on the manifold $Q$ and $(\dot{q}^{i})$ are the corresponding velocities.
We introduce the {\it Poincar\'e--Cartan 1-form} as
$$\theta_L=S^{*}(dL)=\frac{\partial L}{\partial {\dot q}^i}dq^i,$$
where $S=\frac{\partial}{\partial \dot{q}^i}\otimes dq^i$ is the canonical vertical endomorphism and $S^{*}$ denotes the adjoint operator. 
The {\it Poincar\'e--Cartan two-form} is defined as 
$$\omega_L=-d\theta_L$$
and the total energy of the system corresponds with $E_L=\Delta(L)-L\in C^{\infty}(TQ),$
where $\Delta=\dot{q}^i\frac{\partial}{\partial \dot{q}^i}$ is the {\it Liouville vector field} \cite{Crampin,Klein,LeonRodrigues}.
We say that $L(qî,\dot{q}^i)$ is {\it regular} if the Hessian matrix
\begin{equation}
 \left(W_{ij}\right)=\left(\frac{\partial^2 L}{\partial \dot{q}^i \partial \dot{q}^j}\right)
\end{equation}
is invertible. 
From here, we recover the classical expressions
$$\omega_L=dq^i\wedge dp_i,\quad \text{such that}\quad p_i=\frac{\partial L}{\partial \dot{q}^i},\quad E_L=\dot{q}^ip_i-L.$$

Geometrically, the Euler--Lagrange equations can be written in the symplectic way as
\begin{equation}
 \iota_{\xi_L}\omega_L=dE_L
\end{equation}
whose solution $\xi_L$ is called the {\it Euler--Lagrange vector field}. It is a second-order differential equation (SODE, for short); indeed if we write the
Euler--Lagrange vector field explicitly,
\begin{equation}
 \xi_L=q^{i}\frac{\partial}{\partial q^i}+\xi_i(q_i,\dot{q}_i)\frac{\partial}{\partial {\dot q}^i}
\end{equation}
its integral curves $(q^i(t),\dot{q}^i(t))$ are lifts of their projections $(q^i(t))$ on $Q$ and are solutions of the system of differential equations
\begin{align}
 \frac{dq^i(t)}{dt}&=\dot{q}^i,\nonumber\\
 \frac{d\dot{q}^i(t)}{dt}&=\xi^i,
\end{align}
which is equivalent to 
\begin{equation}
 \frac{d^2 q^i(t)}{dt^2}=\xi^i.
\end{equation}
The curves $q(t)$ in $Q$ are called the solutions of $\xi_L$ that correspond with the solutions of the Euler--Lagrange equation
\begin{equation}
 \frac{d}{dt}\left(\frac{\partial L}{\partial \dot{q}^i}\right)=\frac{\partial L}{\partial q^i},\quad 1\leqslant i\leqslant n=\text{dim} Q.
\end{equation}


The passing from the Lagrangian to the Hamiltonian setting is introduced by a Legendre transformation, as the fibered mapping $FL:TQ\rightarrow T^{*}Q$ 
such that $\pi_Q\circ FL=\tau_Q$. Here, $T^{*}Q$ is the cotangent bundle of $Q$ with canonical projection $\pi_Q:T^{*}Q\rightarrow Q$. A simple
computation shows that {\it FL} is a local diffeomorphism if and only if $L$ is regular. We say that the Lagrangian is {\it hyperregular} if the Legendre transform $FL(q^i,\dot{q}^i)=(q^i,p_i)$ is a global diffeomorphism.
From now on, and since this is the usual case in Mechanics, we will assume that $L$ is hyperregular.
The Hamiltonian is retrieved through $H(q^i,p_i)=E_L \circ FL^{-1}$.

If $\omega_Q$ is the canonical symplectic form on $T^{*}Q$ where $(q^i,p_i)$ are the canonical coordinates $T^{*}Q$, then $\omega_Q=dq^i\wedge dp_i$ and therefore
\begin{equation}
 \iota_{X_H}\omega_Q=dH
\end{equation}
is the geometric Hamilton equation, where the Hamiltonian vector field $X_H$ on $T^{*}Q$ has the expression
\begin{equation}
 X_H=\sum_{i=1}^n\left(\frac{\partial H}{\partial p_i}\frac{\partial}{\partial q^i}-\frac{\partial H}{\partial q^i}\frac{\partial}{\partial p_i}\right)
\end{equation}
on a $2n$ dimensional manifold.
\noindent
Its integral curves $(q^i(t),p_i(t))$ satisfy the Hamilton equations
\begin{equation}\label{hamileq22}
\left\{\begin{aligned}
 {\dot q}^i&=\frac{\partial H}{\partial p_i},\\
 {\dot p}_i&=-\frac{\partial H}{\partial q^i}
 \end{aligned}\right.
 \end{equation}
for all $i=1,\dots,n$. 

\begin{definition}
 Given two manifolds and a map between them, $F:M_1\rightarrow M_2$, we say that a vector field $X$ on $M_1$ and another vector
 field $Y$ on $M_2$ are $F$-related if
\begin{equation}
 Y{(F(x))}=dF(x)(X(x)), \qquad \text{for all}\quad x\in M_1 
\end{equation}

\end{definition}
\noindent
The Legendre transformation maps solutions of $\xi_L$ to solutions of $X_H$ since the Legendre transform is a symplectomorphism, that is $(FL)^{*}\omega_Q=\omega_L$.
Therefore, $\xi_L$ and $X_H$ are $FL$-related by the Legendre transformation.

\subsection{The Hamilton--Jacobi equation}
The {\it Hamilton--Jacobi equation} comes from the integral action along the solution over the time interval $(0,t)$
\begin{equation}\label{acc}
 S(q^i,t):=\int_0^t{\left(p_i(s)\dot{q}^i(s)-H(q^i(s),p_i(s))\right)ds}
\end{equation}
where the result is a function of the end point $(q,t)\in Q\times \mathbb{R}$. By taking variations of the end point, we arrive
at the {\it time-dependent Hamilton--Jacobi equation} \cite{Gold,Kibble}
\begin{equation}\label{tdepHJ}
 \frac{\partial S}{\partial t}+H\left(q^i,\frac{\partial S}{\partial q^i}\right)=0.
\end{equation}
Solving this equation consists on finding the {\it principal function} $S(q^i,t)$, where $H=H(q^i,p_i)$ is the Hamiltonian of the system. 
Conversely, it can be proven that if $S(q^i,t)$ is a solution of the Hamilton--Jacobi equation, then $S(q^i,t)$ is a generating function
for a family of symplectic flows that describe the dynamics of the Hamilton equations \eqref{hamileq22}.
 If the principal function is separable in time, then we can propose the {\it Ansatz} $S=W(q^1,\dots,q^n)-Et,$
where $E$ is the total energy of the system.

\noindent
Then, equation \eqref{tdepHJ} turns into

\begin{equation}\label{HJeq1}
 H\left({q}^i,\frac{\partial W}{\partial {q}^i}\right)=E.
\end{equation}
which is known as the {\it time-independent Hamilton--Jacobi equation}.
Indeed, if we find a solution $W$ of \eqref{HJeq1}, then any solution of the Hamilton
equations is retrieved by taking $p_i=\partial W/\partial {q}^i.$

Geometrically, this can be interpreted through a diagram (see below) in which a Hamiltonian vector field $X_{H}$ can be projected into the configuration manifold by means of a 1-form $dW$, and then the integral curves of the projected
vector field $X_{H}^{dW}$ can be transformed into integral curves of $X_{H}$ provided that $W$ is a solution of \eqref{HJeq1},
\[
\xymatrix{ T^{*}Q
\ar[dd]^{\pi} \ar[rrr]^{X_H}&   & &TT^{*}Q\ar[dd]^{T\pi}\\
  &  & &\\
 Q\ar@/^2pc/[uu]^{dW}\ar[rrr]^{X_H^{dW}}&  & & TQ}
\]
where

\begin{equation}
 X_H^{dW}= T_{\pi}\circ X_H\circ dW
\end{equation}

 This implies that $(dW)^{*}H=E$, with $dW$ being a section of the cotangent bundle. In other words, we are looking for a section $\alpha$ of $T^{*}Q$ 
such that $\alpha^{*}H=E$. As it is well-known, the image of a one-form is a Lagrangian submanifold of $(T^{*}Q, \omega_Q)$ if
and only if $d\alpha=0$ \cite{AbraMars,Arnold}.
That is, $\alpha$ is locally exact, say $\alpha=dW$ on an open subset around each point.

 Let $(T^{*}Q, \omega=-d\theta)$ be the cotangent bundle of $Q$ equipped with its canonical symplectic form $\omega_Q$, let $X_H$ be a Hamiltonian vector field
 on $T^{*}Q$ for a Hamiltonian $H$ and $X_H^{dW}$ a vector field on $Q$. Consider a function $W:Q\rightarrow \mathbb{R}$. The vector fields $X_H$ and $X_H^{dW}$ are $dW$-related
 if and only if
 \begin{equation}
  d(H\circ dW)=0.
 \end{equation}

\subsection{Discrete Mechanics}

Discrete Mechanics is a reformulation of the classical Lagrangian and Hamiltonian Mechanics with discrete variables. Its formulation appears from
discrete variational principles from which to derive analogues of the Euler--Lagrange (EL) and Hamilton equations in discrete form. There exist analogues of concepts of the continuous time framework. For example, we have symplectic forms, Legendre transformations, momentum maps
and Noether theorems \cite{OhsawaBlochLeok}.

Let $a,b\in \mathbb{R}$ and $a<b$, and $h=\frac{b-a}{N}$, where $N$ is the number of divisions of the discrete lattice where motion occurs. Consider $\mathbb{T}$ is a subspace of $\mathbb{R}$ defined by $\mathbb{T}=hZ\bigcap [a,b]$ where
$hZ=\{hz| z\in \mathbb{R}\}$.
Here, we denote by $C^{l}([a,b],\mathbb{R}^n)$ the set of $l$-times differentiable functions, for example $q:[a,b]\rightarrow \mathbb{R}^n$, this is $q\mapsto (q_1,\dots,q_n)$.
In the discrete framework, the Lagrangian is substituted by a discrete Lagrangian $L_d:Q\times Q\rightarrow \mathbb{R}$, where $Q$ is made of discrete variables $q \in C^{1}([a,b],\mathbb{R}^n)$. 
This discrete Lagrangian is an approximation of the {\it exact discrete Lagrangian} 
$$L_d^{ex}(q_j,q_{j+1})=\int_{t_j}^{t_{j+1}}{L(q(t),\dot{q}(t))dt},$$
where $q:[t_j,t_{j+1}]\rightarrow Q$ is the solution of the continuous EL equation with boundary conditions $q(t_j)=q_j,q(t_{j+1})=q_{j+1}$.

Now there exists a discrete Lagrangian flow in terms of points $\{q_j\}$\footnote{Notice that now the spatial coordinate has a subindex
that represents the discrete character of a single variable $q$, instead of the superindex which denotes one spatial coordinates of a set of $n$ different ones on a $n$ dimensional configuration space.} with $j=1,\dots,N$ on $Q$. 
The EL equations can be described by a discrete variational principle $\delta S_d=0$, where
\begin{equation}\label{acc2}
 S_d(\{q_j\})=\sum_{j=1}^{N-1} L_d(q_j,q_{j+1})
\end{equation}
with $j=1,\dots,N-1$.
In similar fashion as in Classical Mechanics, we can perform variations to derive the {discrete EL equations} in this case.
If we calculate $\delta S_d(q_j)=0$ with respect to a fixed point $q_j$, we obtain
\begin{equation}\label{disELE}
 D_2 L_d(q_{j-1},q_j)+D_1 L_d(q_j,q_{j+1})=0,
\end{equation}
where $D_1$ denotes partial derivative with respect to the first argument in the function $L_d$ and $D_2$ is the partial derivative with respect to the second argument.
Equations in \eqref{disELE} are known as the {\it discrete Euler--Lagrange equations} (DEL for short).

They give rise to a {\it Lagrangian discrete flow}
$\mathcal{F}_{L_d}:Q\times Q\rightarrow Q\times Q$ on the trivialized vector bundle $TQ\simeq Q\times Q$ such that
$$\mathcal{F}_{L_d}(q_{j-1},q_j)\rightarrow (q_j,q_{j+1}).$$
\noindent
Equivalently, we can define the {\it discrete one forms},
\begin{align}
 \theta^{+}_d&=D_2L_d(q_j,q_{j+1})dq_{j+1},\nonumber\\
 \theta^{-}_d&=-D_1L_d(q_j,q_{j+1})dq_{j}
\end{align}
that define a unique {\it discrete symplectic form}
\begin{equation}
 \Omega_d(q_j,q_{j+1})=-d\theta^{\pm}_d=-D_1D_2L_d(q_j,q_{j+1})\ dq_{j}\wedge dq_{j+1}
\end{equation}
and the flow $\mathcal{F}_{L_d}$ is a symplectomorphism, that is
$$\mathcal{F}_{L_d}^{*}\Omega_d=\Omega_d.$$

To derive a discrete Hamiltonian approach, we define {\it discrete Legendre transformations}, which are the {\it right} and {\it left} discrete Lagrange
transformations. Respectively,
\begin{align}\label{legtrans}
 \mathbb{F}L_d^{+}(q_j,q_{j+1})&=(q_{j+1},D_2L_d(q_j,q_{j+1})),\nonumber\\
\mathbb{F}L_d^{-}(q_j,q_{j+1})&=(q_{j},-D_1L_d(q_j,q_{j+1})),
\end{align}
for all $j=1,\dots,N-1.$ Generally, we will refer to the Legendre transformation (right $\mathbb{F}L_d^{+}$ or $\mathbb{F}L_d^{-}$, independently) as $FL$ simply.
From here, we can define the corresponding momenta as
\begin{align}
 p^{+}_{j,j+1}&=D_2L_d(q_j,q_{j+1}),\nonumber\\
p^{-}_{j,j+1}&=-D_1L_d(q_j,q_{j+1}).
\end{align}
which are normally unified under the common notation 
$$p_j:=p^{+}_{j-1,j}=p^{-}_{j,j+1},$$
due to the discrete Euler--Lagrange equations in \eqref{disELE}

The composition of the right discrete and left Legendre transforms is a flow
defined on the cotangent space $\mathcal{F}^H_d:T^{*}Q\rightarrow T^{*}Q$ 
\begin{equation}
 \mathcal{F}^H_d=\mathbb{F}L_d^{+}\circ (\mathbb{F}L_d^{-})^{-1}
\end{equation}

The following diagram summarizes the discrete Legendre transformations and their composition

\[
\xymatrix{ &Q\times Q
\ar[dl]^{\mathbb{F}L_d^{-}} \ar[dr]^{\mathbb{F}L_d^{+}}\ar[rr]^{\mathcal{F}_{L_d}}&  & Q\times Q\ar[dl]^{\mathbb{F}L_d^{-}} \ar[dr]^{\mathbb{F}L_d^{+}} &  \\
 T^{*}Q\ar[rr]^{\mathcal{F}^H_d}& & T^{*}Q\ar[rr]^{\mathcal{F}^H_d}& & T^{*}Q}
\]

Point to point, 

\[
\xymatrix{ &(q_{j-1},q_{j})
\ar[dl]^{\mathbb{F}L_d^{-}} \ar[dr]^{\mathbb{F}L_d^{+}}\ar[rr]^{\mathcal{F}_{L_d}}&  & (q_{j},q_{j+1})\ar[dl]^{\mathbb{F}L_d^{-}} \ar[dr]^{\mathbb{F}L_d^{+}} &  \\
 (q_{j-1},p_{j-1})\ar[rr]^{\mathcal{F}^H_d}& & (q_{j},p_{j})\ar[rr]^{\mathcal{F}^H_d}& & (q_{j+1},p_{j+1})}
\]

\noindent
The {\it discrete Hamiltonian flow} is $\mathcal{F}_d^{H}:T^{*}Q\rightarrow T^{*}Q$ is a symplectomorphism, that is $(\mathcal{F}_{d}^{H})^{*}\omega_Q=\omega_Q$
that brings points into points 
$$\mathcal{F}_{d}^{H}:(q_j,p_j)\rightarrow (q_{j+1},p_{j+1})$$

To derive a Hamiltonian formalism, we use that a discrete Lagrangian is essentially a generating function of type one \cite{Arnold} and that
we can apply the defined Legendre transformations to the discrete Lagrangian to find a discrete Hamiltonian \cite{Arnold,Gold2}.
With the {right} Legendre transformation, we have
\begin{equation}\label{pd2}
 p_{j+1}=D_2L_d(q_j,q_{j+1}).
\end{equation}
Here we perform local computations. We can identify the configuration manifold $Q$ with $\mathbb{R}^n$, then we can define a {\it discrete Hamiltonian} as the function $H:\mathbb{R}^n\times \mathbb{R}^n\rightarrow \mathbb{R}$ such that for $(q,p)\in C^{1}(\mathbb{T},\mathbb{R}^n)\times C^{1}(\mathbb{T},\mathbb{R}^n)$
we have time evolution of $(q,p)$ given by the discrete Hamilton equations. We define the {\it right discrete Hamiltonian}
\begin{equation}\label{rdh}
 H^{+}_d(q_j,p_{j+1})=p_{j+1}q_{j+1}-L_d(q_j,q_{j+1})
\end{equation}
and we obtain the {\it right discrete Hamilton equations}
\begin{align}\label{rdhe}
 q_{j+1}&=D_2H^{+}_d(q_j,p_{j+1}),\nonumber\\
p_j&=D_1H^{+}_d(q_j,p_{j+1}).
\end{align}
Equivalently, with the {left} Legendre transformation, we can obtain the {\it left discrete Hamiltonian}
$$H^{-}_d=-p_jq_j-L_d(q_j,q_{j+1})$$
and the {\it left discrete Hamilton equations}
\begin{align}\label{ldhe}
 q_{j}&=-D_2H^{-}_d(q_{j+1},p_j),\nonumber\\
p_{j+1}&=-D_1H^{-}_d(q_{j+1},p_j).
\end{align}
\noindent
{\it Remark}: There exists a discrete version of the extended Hamilton's variational principle \cite{Gold}. It says
 \begin{theorem}
  The points $(q,p)\in C^{1}([a,b],\mathbb{R}^n)\times C^{1}([a,b],\mathbb{R}^n)$ satisfying the discrete Hamilton equations are critical points of the functional
 \begin{equation}
  \mathcal{L}_H: C^{1}([a,b],\mathbb{R}^n)\times C^{1}([a,b],\mathbb{R}^n)\longrightarrow \mathbb{R}
 \end{equation}
 such that
 \begin{equation}
  \mathcal{L}_H(q,p)=\int_a^b{L_H(q(t),\dot{q}(t),p(t),\dot{p}(t))dt}
 \end{equation}
 where 
 $$L_H=p_{j+1}q_{j+1}-H^{+}(q_j,q_{j+1}).$$
 \end{theorem}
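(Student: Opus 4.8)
The plan is to read the stated functional in its genuinely discrete form. Since every object in $L_H$ carries the lattice index $j$, the integral must be understood as the discrete action sum
\begin{equation}
 S_d(\{q_j\},\{p_j\})=\sum_{j=0}^{N-1}\left[p_{j+1}q_{j+1}-H^{+}_d(q_j,p_{j+1})\right],
\end{equation}
which is the faithful discrete counterpart of $\int(p\dot q-H)\,dt$ and is consistent with the definition of the right discrete Hamiltonian in \eqref{rdh}. A pair of sequences $(q,p)$ is then \emph{critical} precisely when the first variation $\delta S_d$ vanishes for all admissible, mutually independent variations $\delta q_j$ and $\delta p_j$, with the configuration endpoints held fixed.

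First I would compute $\delta S_d$ by differentiating the summand term by term and reindexing so that the coefficient of each interior $\delta q_k$ and $\delta p_k$ is isolated. The variable $q_k$ enters only two summands: the $j=k-1$ term, through $p_{k}q_{k}$, and the $j=k$ term, through $-H^{+}_d(q_k,p_{k+1})$; collecting these gives
\begin{equation}
 \frac{\partial S_d}{\partial q_k}=p_k-D_1H^{+}_d(q_k,p_{k+1}).
\end{equation}
By contrast $p_k$ occurs in a single summand, the $j=k-1$ term $p_k q_k-H^{+}_d(q_{k-1},p_k)$, whence
\begin{equation}
 \frac{\partial S_d}{\partial p_k}=q_k-D_2H^{+}_d(q_{k-1},p_k).
\end{equation}

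Next, because the interior variations are arbitrary and independent, the discrete analogue of the fundamental lemma of the calculus of variations forces each coefficient to vanish separately. Setting the first expression to zero yields $p_k=D_1H^{+}_d(q_k,p_{k+1})$, and the second yields $q_k=D_2H^{+}_d(q_{k-1},p_k)$; after the index shifts $k\mapsto j$ and $k\mapsto j+1$ these are exactly the right discrete Hamilton equations \eqref{rdhe}. Conversely, if \eqref{rdhe} holds then both coefficients vanish identically, so $\delta S_d=0$ and $(q,p)$ is critical; this gives the equivalence in both directions.

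The step I expect to demand the most care is the bookkeeping of the boundary contributions. Upon reindexing, $q_0$ survives only in the $j=0$ summand and $q_N$ only in the $j=N-1$ summand, so the reshuffling leaves uncompensated terms $-D_1H^{+}_d(q_0,p_1)\,\delta q_0$ and $p_N\,\delta q_N$ at the two ends; the interior equivalence holds only once the endpoint data $q_0$ and $q_N$ are fixed so that these drop out, while the momentum variations $\delta p_k$ need no such restriction since each $p_k$ appears in a single term. Verifying that the retained interior variations are genuinely free, and that fixing the configuration endpoints (but not the momenta) is the correct convention, is the one place where the argument is not purely mechanical; granting it, the interior criticality conditions coincide with \eqref{rdhe} under the stated smoothness alone.
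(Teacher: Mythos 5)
The paper itself offers no proof of this statement: it is presented as a remark, with the result attributed to the literature (the citation to Goldstein), so there is no internal argument to compare yours against. Your proof is correct and is the standard discrete variational argument. Two points are worth flagging. First, you had to repair the statement before proving it, since the paper writes the functional as a time integral of quantities carrying discrete indices, and its summand contains the misprint $H^{+}(q_j,q_{j+1})$ where $H^{+}_d(q_j,p_{j+1})$ is meant; your reading of $\mathcal{L}_H$ as the discrete action
\begin{equation*}
 S_d(\{q_j\},\{p_j\})=\sum_{j=0}^{N-1}\left[p_{j+1}q_{j+1}-H^{+}_d(q_j,p_{j+1})\right]
\end{equation*}
is the only one consistent with \eqref{rdh} and \eqref{rdhe}, so this interpretive step is justified. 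Second, your bookkeeping is exactly right: $q_k$ enters two consecutive summands and $p_k$ exactly one, so the interior criticality conditions are $p_k=D_1H^{+}_d(q_k,p_{k+1})$ and $q_k=D_2H^{+}_d(q_{k-1},p_k)$, which after reindexing are precisely the right discrete Hamilton equations \eqref{rdhe}; and the uncompensated boundary terms $-D_1H^{+}_d(q_0,p_1)\,\delta q_0$ and $p_N\,\delta q_N$ are correctly disposed of by fixing the configuration endpoints while leaving all momentum variations free, which is indeed the standard convention for discrete Hamiltonian variational principles. Note also that you prove the equivalence (critical point if and only if the discrete Hamilton equations hold), which is stronger than the one-directional implication actually stated in the theorem.
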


\subsection{The discrete Hamilton--Jacobi equation}

The discrete Hamiltonian theory and in particular, the discrete Hamilton--Jacobi equation were developed as a generalization
of nonsingular, discrete optimal control problems \cite{Lall}. The discrete Hamilton--Jacobi equation is expected as
the outcome of a discrete variational problem. 
If we reconsider the discrete action \eqref{acc2},
\begin{equation*}
 S_d^N(\{q_j\}_{j=1,\dots,N})=\sum_{j=0}^{N-1} L_d(q_j,q_{j+1})
\end{equation*}
that written in terms of the right discrete Hamiltonian \eqref{rdh},
\begin{equation*}
 S_d^j(q_j)=\sum_{k=1}^{j-1} \left(p_{k+1}q_{k+1}-H_d^{+}(q_k,p_{k+1})\right)
\end{equation*}
which if evaluated along the solution of the right discrete Hamilton equations \eqref{rdhe}, then $S_d^{j}(q_j)$ is a function
of the end point coordinates $q_j$ and the discrete end time $j$.

On the other hand, some previous works \cite{ElnaSch} have specifically derived an equation
based on the philosophy of a generating function of a coordinate transformation that trivializes the dynamics \cite{Gold,Gold2}.
The work by T. Oshawa, A.M. Bloch and M. Leok \cite{OhsawaBlochLeok} generalizes the previous statement by finding
a discrete generating function $S^{j}(q_j)$ of a transformation $(q_j,p_j)\rightarrow (Q_j,P_j)$ in which the discrete dynamics is trivial.
The main theorem is the following.
\begin{theorem}
 Consider the right discrete Hamilton equations \eqref{rdhe} and a discrete phase space $\{(q_j,p_j)\}_{j=1}^N$. Consider a change of coordinates
 $(q_j,p_j)\rightarrow (Q_j,P_j)$, for all $j=1,\dots,N$ that satisfies
 \begin{enumerate}
  \item The old and new coordinates are related by a generating function $S^j:\mathbb{R}^n\rightarrow \mathbb{R}$
  of the type
  \begin{align}
   P_j&=-D_1S^j(Q_j,q_j),\nonumber\\
   p_j&=D_2S^j(Q_j,q_j).
  \end{align}
\item The dynamics in the new coordinates $\{(Q_j,P_j)\}_{j=1}^N$ is rendered trivial, i.e., $(Q_{j+1},P_{j+1})=(Q_j,P_j).$
  \end{enumerate}
  Then, the set of functions $\{S^j_d\}$ with $j=1,\dots,N$ satisfies the discrete Hamilton--Jacobi equation:
  \begin{align}\label{firstdiscrete}
    S_d^{j+1}(q_{j+1})&-S_d^j(q_j)\nonumber\\
&-DS_d^{j+1}(q_{j+1})q_{j+1}+H_d^{+}(q_j,DS^{j+1}_d(q_{j+1}))=0.
  \end{align}

\end{theorem}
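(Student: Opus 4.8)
The plan is to read the discrete Hamilton--Jacobi equation straight off the two hypotheses, exploiting that a trivialized flow freezes the new coordinates. First I would observe that condition (2), $(Q_{j+1},P_{j+1})=(Q_j,P_j)$, forces every $Q_j$ and every $P_j$ to equal common constants, say $Q$ and $P$. Consequently the two-point generating function $S^j(Q_j,q_j)$ becomes, with $Q$ held fixed, an honest function of the single argument $q_j$, which is exactly the object $S_d^j(q_j):=S^j(Q,q_j)$ appearing in \eqref{firstdiscrete}. Differentiating at fixed $Q$ and using the second relation in condition (1), $p_j=D_2S^j(Q_j,q_j)$, I obtain $DS_d^j(q_j)=p_j$, and likewise $DS_d^{j+1}(q_{j+1})=p_{j+1}$. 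This is the bridge between the momenta of the Hamilton equations \eqref{rdhe} and the gradients in the target equation.

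The core step is to prove the one-step recursion $S_d^{j+1}(q_{j+1})-S_d^j(q_j)=p_{j+1}q_{j+1}-H_d^{+}(q_j,p_{j+1})$. Here I would work with full differentials. The generating relations in (1) give $dS^j=-P_j\,dQ_j+p_j\,dq_j$, so for $G:=S^{j+1}(Q,q_{j+1})-S^j(Q,q_j)$ the contributions along $dQ$ are $(P_j-P_{j+1})\,dQ$, which vanish by triviality, leaving $dG=p_{j+1}\,dq_{j+1}-p_j\,dq_j$. Next I would differentiate the right discrete Hamiltonian and use \eqref{rdhe} to write $dH_d^{+}(q_j,p_{j+1})=p_j\,dq_j+q_{j+1}\,dp_{j+1}$; combining this with $d(p_{j+1}q_{j+1})=p_{j+1}\,dq_{j+1}+q_{j+1}\,dp_{j+1}$ recasts $dG$ as $d(p_{j+1}q_{j+1})-dH_d^{+}$. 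Hence $d\!\left[G-p_{j+1}q_{j+1}+H_d^{+}(q_j,p_{j+1})\right]=0$.

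To finish I would integrate this identity: $G-p_{j+1}q_{j+1}+H_d^{+}(q_j,p_{j+1})$ is constant, and normalizing it to zero delivers the recursion above. Substituting $p_{j+1}=DS_d^{j+1}(q_{j+1})$ from the first paragraph then turns the recursion into precisely \eqref{firstdiscrete}. As a consistency check that also pins down the constant, I would note that $H_d^{+}=p_{j+1}q_{j+1}-L_d$ by \eqref{rdh}, so the recursion reads $S_d^{j+1}(q_{j+1})-S_d^j(q_j)=L_d(q_j,q_{j+1})$; telescoping identifies $S_d^j$ with the discrete action $\sum_{k}L_d(q_k,q_{k+1})$ introduced before the theorem, which vanishes at the initial index and thereby fixes the integration constant to zero.

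I expect the main obstacle to be the second paragraph, and within it the cancellation of the $dQ$ terms: this is the single place where hypothesis (2) is indispensable, since without $P_{j+1}=P_j$ the differential $dG$ would carry an extra $(P_j-P_{j+1})\,dQ$ with no counterpart in the Hamilton equations, obstructing the exactness $d[\,\cdot\,]=0$. A secondary point requiring care is the vanishing of the integration constant, which I resolve through the telescoping identification of $S_d^j$ with the discrete action.
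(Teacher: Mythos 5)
Your proof is correct, but it takes a genuinely different route from what the paper records. The paper never proves this theorem itself---it defers to \cite{OhsawaBlochLeok}---and its closest in-text argument is the proof of the discrete Hamilton--Jacobi theorem in Section 3.1, which obtains the same equation \eqref{hjen} by a shortcut: there $S_d^j$ is effectively \emph{defined} as the discrete action \eqref{acc2} written through the right Legendre transform, so the one-step recursion \eqref{hje2} is immediate, and one only differentiates it in $q_{j+1}$ to get $p_{j+1}=DS_d^{j+1}(q_{j+1})$ and substitutes; the coordinate-transformation hypotheses of the statement are never invoked. You instead argue from those hypotheses: condition 2 freezes $(Q_j,P_j)$, condition 1 then yields the momentum identification $DS_d^j(q_j)=p_j$ directly, and the recursion $S_d^{j+1}(q_{j+1})-S_d^j(q_j)=p_{j+1}q_{j+1}-H_d^{+}(q_j,p_{j+1})$ is recovered from the exactness computation $dG=d\bigl(p_{j+1}q_{j+1}-H_d^{+}(q_j,p_{j+1})\bigr)$, where the $dQ$ terms cancel by triviality and $dH_d^{+}=p_j\,dq_j+q_{j+1}\,dp_{j+1}$ comes from \eqref{rdhe}. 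This is essentially the Ohsawa--Bloch--Leok argument, and it is the more faithful proof of the statement as written, since it actually uses both hypotheses and isolates the single point where triviality matters; the paper's route buys brevity at the cost of proving a slightly different assertion (about the action sum, not about an arbitrary trivializing generating function). One caveat on your final step: hypotheses 1 and 2 determine each $S^j$ only up to an additive constant $c_j$, and \eqref{firstdiscrete} is not invariant under $S^j\mapsto S^j+c_j$, so the equation can hold only after a normalization of these constants. You flag this, but your telescoping ``consistency check'' is mildly circular: identifying $S_d^j$ with the discrete action \emph{is} the normalization, not an independent derivation of it. This looseness sits in the theorem statement itself (and in the original reference), so it does not undermine your argument; it should simply be presented as a choice of the constants $c_j$ rather than as something forced by the hypotheses.
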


See reference \cite{OhsawaBlochLeok} for proof of this theorem.

\section{A geometric and discrete Hamilton--Jacobi theory}

In this section we obtain a discrete geometric Hamilton--Jacobi theory in terms of projected flows and projected Hamiltonian vector fields\footnote{By projected we do not refer to a projective flow/vector field but to the restriction of a Hamiltonian flow/vector field on the phase space $T^{*}Q$ 
along the image of a Lagrangian submanifold $dW$.}.
In particular, the problem of a discrete theory in terms of vector fields roots in the definition of a discrete vector field, that we introduce
in forthcoming subsections. 


%
%

\subsection{The discrete flow approach}

A different approach but equivalent to the usual Hamilton--Jacobi theory relying on the projection of a Hamiltonian vector field
via $\gamma=dW$ is here substituted by the projection of discrete flows. 

We propose an analogue for the geometric diagram as follows \cite{LeonSardon1,LeonSardon2}. Consider the discrete flow $\mathcal{F}_d^{H}:T^{*}Q\rightarrow T^{*}Q$
and a discrete section $\gamma=DS_d$, where $S_d:Q\rightarrow \mathbb{R},$ is the discrete generating function. The projected
flow is here $(\mathcal{F}_d^{H})^{DS_d}:Q\rightarrow Q$.

\[
\xymatrix{T^{*}Q
\ar[dd]^{\pi_Q} \ar[rrr]^{\mathcal{F}_d^{H}}&   & &  T^{*}Q\ar[dd]^{\pi_Q}\\
  &  & &\\
 Q\ar@/^2pc/[uu]^{DS_d}\ar[rrr]^{(\mathcal{F}_d^{H})^{DS_d}}&  & & Q\ar@/^2pc/[uu]^{DS_d}}
\]

The point to point interpretation is
\[
\xymatrix{ (q_j,p_{j})
\ar[dd]^{\pi_Q} \ar[rrr]^{\mathcal{F}_d^{H}}&   & & (q_{j+1},p_{j+1})\ar[dd]^{\pi_{Q}}\\
  &  & &\\
 (q_j)\ar@/^2pc/[uu]^{DS_d^j(q_j)}\ar[rrr]^{(\mathcal{F}_d^{H})^{DS_d}}&  & & (q_{j+1})\ar@/^2pc/[uu]^{DS_d^{j+1}(q_{j+1})}}
\]
where $\pi_Q$ is the natural projection to the configuration manifold and the flow is such that 
$$(\mathcal{F}_d^{H}):(q_j,DS_d^j(q_j))\longrightarrow (q_{j+1},DS_d^{j+1}(q_{j+1})).$$
Here $\{S_d^j\}$ is a family of generating functions of the Hamilton--Jacobi equation.
\noindent
We say that two flows are related if the following condition is fulfilled
\begin{equation}\label{comp22}
 (\mathcal{F}_d^H)^{DS_d}=\pi_{Q}\circ \mathcal{F}_d^H\circ DS^j(q_j).
\end{equation}
\noindent
This is equivalent to saying that, point to point,
\begin{equation*}
 \mathcal{F}_d^{H}(q_j,DS_d(q_j))=(q_{j+1},DS_d(q_{j+1}))
\end{equation*}

This is key to the following theorem.

\begin{theorem}[The discrete Hamilton--Jacobi theorem]
The two flows \newline $(\mathcal{F}_d^{H})^{DS_d(q_j)}$ and $\mathcal{F}_d^{H}$ are $DS_d$-related if the following equation
\begin{align}\label{hjen}
 S_d^{j+1}(q_{j+1})&-S_d^j(q_j)\nonumber\\
&-DS_d^{j+1}(q_{j+1})q_{j+1}+H_d^{+}(q_j,DS^{j+1}_d(q_{j+1}))=0
\end{align}
is satisfied. We shall refer to \eqref{hjen} as the {\it discrete Hamilton--Jacobi equation}.
\end{theorem}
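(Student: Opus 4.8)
The plan is to show that the discrete Hamilton--Jacobi equation (\ref{hjen}) is exactly the condition forcing the flow $\mathcal{F}_d^H$ to carry the graph of the section $q_j\mapsto(q_j,DS_d^j(q_j))$ onto the graph of $q_{j+1}\mapsto(q_{j+1},DS_d^{j+1}(q_{j+1}))$. First I would unwind the definition of $DS_d$-relatedness. By (\ref{comp22}) and its point-to-point form, the two flows are $DS_d$-related precisely when, setting $(q_{j+1},p_{j+1})=\mathcal{F}_d^H(q_j,DS_d^j(q_j))$, one has simultaneously $p_j=DS_d^j(q_j)$ and $p_{j+1}=DS_d^{j+1}(q_{j+1})$ along the orbit. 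Because $\mathcal{F}_d^H$ is generated by the right discrete Hamiltonian via the Hamilton equations (\ref{rdhe}), namely $q_{j+1}=D_2H_d^+(q_j,p_{j+1})$ and $p_j=D_1H_d^+(q_j,p_{j+1})$, the relatedness condition reduces to the momentum-matching identity $D_1H_d^+\bigl(q_j,DS_d^{j+1}(q_{j+1})\bigr)=DS_d^j(q_j)$ at level $j$, the level-$(j+1)$ match being built into this very evaluation of $H_d^+$.

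The heart of the proof is a differentiation of (\ref{hjen}). Since (\ref{hjen}) already carries $p_{j+1}=DS_d^{j+1}(q_{j+1})$ inside the second slot of $H_d^+$, I would treat $q_j$ as the free configuration variable and let $q_{j+1}=q_{j+1}(q_j)$ be the function determined implicitly by the flow constraint $q_{j+1}=D_2H_d^+\bigl(q_j,DS_d^{j+1}(q_{j+1})\bigr)$; the regularity (hyperregularity) hypothesis on $H_d^+$ is what guarantees, through the implicit function theorem, that this dependence is well defined. Differentiating (\ref{hjen}) with respect to $q_j$ by the chain rule, two systematic cancellations take place: the terms carrying the Hessian $D^2S_d^{j+1}(q_{j+1})$ cancel precisely because $D_2H_d^+(q_j,p_{j+1})=q_{j+1}$, and the two explicit copies of $p_{j+1}=DS_d^{j+1}(q_{j+1})$ cancel against one another. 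All that survives is $D_1H_d^+\bigl(q_j,DS_d^{j+1}(q_{j+1})\bigr)-DS_d^j(q_j)=0$. Since the first Hamilton equation in (\ref{rdhe}) reads $p_j=D_1H_d^+(q_j,p_{j+1})$, this is exactly $p_j=DS_d^j(q_j)$, which together with the built-in $p_{j+1}=DS_d^{j+1}(q_{j+1})$ is the point-to-point relatedness condition. This establishes the stated ``if'' direction.

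I would then remark that the computation is reversible: integrating the surviving identity $D_1H_d^+(q_j,DS_d^{j+1}(q_{j+1}))=DS_d^j(q_j)$ back in $q_j$ reconstructs (\ref{hjen}) up to a $j$-dependent additive constant, which is harmless since each $S_d^j$ is only defined up to such a constant; hence relatedness and the discrete Hamilton--Jacobi equation are in fact equivalent. Throughout I would write the calculation in one configuration dimension for transparency and note that the $n$-dimensional case is identical componentwise, with $DS_d^j$ read as the covector $(\partial S_d^j/\partial q^i)$, the Hessian as $(\partial^2 S_d^{j+1}/\partial q^i\partial q^k)$, and summation over the repeated index. The main obstacle I anticipate is not any individual derivative but the careful bookkeeping of the implicit dependencies --- pinning down which of $q_j,q_{j+1},p_{j+1}$ is independent, invoking the regularity of $H_d^+$ to legitimize $q_{j+1}=q_{j+1}(q_j)$, and checking that the two cancellations are genuine identities rather than relations still to be imposed.
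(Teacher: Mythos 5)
Your proposal is correct, but it follows a genuinely different route from the paper's own proof. The paper argues in two steps that mix Lagrangian and Hamiltonian data: first it takes $S_d^j$ to be the discrete action sum, telescopes the difference $S_d^{j+1}(q_{j+1})-S_d^j(q_j)=p_{j+1}q_{j+1}-H_d^{+}(q_j,p_{j+1})$, differentiates with respect to $q_{j+1}$ to identify $p_{j+1}=DS_d^{j+1}(q_{j+1})$, and substitutes back to obtain \eqref{hjen}; second, it unwinds the commutativity condition \eqref{comp22} to the momentum-matching statement $D_2L_d(q_j,q_{j+1})=DS_d^{j+1}(q_{j+1})$ and observes, using \eqref{hjen}, that this reduces to the definition \eqref{pd2} of the discrete momentum via the right Legendre transform. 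Your argument, by contrast, never invokes the discrete Lagrangian $L_d$ or the interpretation of $S_d$ as an action: you take an arbitrary family $\{S_d^j\}$ satisfying \eqref{hjen}, define $q_{j+1}(q_j)$ implicitly through the flow constraint $q_{j+1}=D_2H_d^{+}\bigl(q_j,DS_d^{j+1}(q_{j+1})\bigr)$, differentiate \eqref{hjen} in $q_j$, and let the two cancellations (the Hessian terms via $D_2H_d^{+}=q_{j+1}$, and the two copies of $DS_d^{j+1}$) leave exactly $D_1H_d^{+}=DS_d^j(q_j)$, i.e.\ the first Hamilton equation, whence relatedness follows by uniqueness of the flow under the regularity hypothesis. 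Your computation checks out, and it has two advantages: it proves the implication actually stated in the theorem (for \emph{any} solution of \eqref{hjen}, not just the action), and it stays entirely on the Hamiltonian side. Interestingly, your differentiation is essentially the computation the paper reserves for its later Proposition establishing the equivalence of the flow and vector-field approaches (``taking the total derivative of \eqref{hjen} and considering $q_{j+1}=D_2H^{+}(q_j,p_{j+1})$''), so your proof in effect merges the theorem with that proposition; the cost is that you need the implicit function theorem and hence a nondegeneracy assumption on $H_d^{+}$ that the paper's more definitional route does not make explicit.
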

Then, we say that $DS_d$ is a {\it discrete solution} for the {discrete Hamilton--Jacobi equation} and $S_d$ is the generating
function.
\begin{proof}
 Considering the definition of the action in \eqref{acc2} and the right Legendre transform \eqref{legtrans}, we have that
\begin{equation}\label{hje2}
  S_d^{j+1}(q_{j+1})-S_d^j(q_j)=p_{j+1}q_{j+1}-H_d^{+}(q_j,p_{j+1}).
\end{equation}
If we derivate with respect to $q_{j+1}$, we obtain
\begin{equation}\label{intro}
 p_{j+1}=DS^{j+1}_d(q_{j+1}),
\end{equation}
and considering the right discrete Hamilton equations, in which $q_{j+1}=D_2H_d^{+}(q_j,p_{j+1})$, if we introduce 
\eqref{intro} into \eqref{hje2}, we arrive at \eqref{hjen}, which is the discrete Hamilton--Jacobi equation.
On the other hand, the flow interpretation using \eqref{comp22} provides
\begin{align}
 &\mathcal{F}_d^H\circ DS_d^j(q_j)=\mathcal{F}_d^H(q_j,p_j)=(q_{j+1},D_2L(q_j,q_{j+1})), \quad \text{and}\\
 &DS_d^{j+1}\circ (\mathcal{F}_d^H)^{DS_d}(q_j)=DS_d(q_{j+1})=(q_{j+1},DS_d(q_{j+1})).
\end{align}
From the commutativity of the diagram, we have
\begin{equation}
 D_2L(q_j,q_{j+1})=DS_d^{j+1}(q_{j+1}).
\end{equation}
that means
\begin{equation}
 D_2L(q_j,q_{j+1}) q_{j+1}=H_d^{+}(q_{j},p_{j+1})+S_d^{j+1}(q_{j+1})-S_d^{j}(q_j)
\end{equation}
according to \eqref{hjen}, and necessarily
\begin{equation}
 p_{j+1}=D_2L_d(q_j,q_{j+1})
\end{equation}
which is true due to definition \eqref{pd2}.

\end{proof}
There is an equivalent interpretion of the equation in terms of the left discrete action. See Appendix A.

%
%
%

\subsection{The discrete vector field approach}

According to the usual geometric Hamilton--Jacobi theory constructed out of vector fields,
analogously to the continuous case, we introduce a commutative diagram for the discrete case based on the results
of discrete Hamiltonian vector fields introduced by Cresson and Pierret \cite{cresson}. 
The discrete least action principle (DLAP for short) worked upon a discrete Lagrangian gives rise to the discrete Euler--Lagrange equations. A discrete Hamilton 
gives rise to a {\it discrete Hamiltonian vector field} $X_d$.

\[
\xymatrix{ L_d
\ar[dd]^{\text {DLAP}} \ar[rrr]^{\text{Leg. transf}}&   & &H\ar[dd]^{\text{definition}}\\
  &  & &\\
 \text{DEL}\ar[rrr]&  & & X_d}
\]

Using the right discrete Hamiltonian \eqref{rdh}, we define its corresponding right discrete Hamilton equations \eqref{rdhe},
and the right discrete Hamiltonian vector field reads \cite{cresson}

\begin{equation}\label{dhvf}
 X_d=\sum_{j=1}^{N-1} \left( D_2H^{+}(q_j,p_{j+1})\frac{\partial}{\partial q_{j+1}}+D_1H^{+}(q_j,p_{j+1})\frac{\partial}{\partial p_j}\right)
\end{equation}

Equivalently, a left discrete Hamiltonian vector field can be defined and the theory can be reconstructed in terms of it (see appendix B).

\medskip

We propose the following commutative diagram for a discrete Hamilton--Jacobi formulation in terms of discrete vector fields,
where $S_d:\mathbb{R}^n\rightarrow \mathbb{R}$ and the vertical arrows denote the obvious projections. We consider
the cotangent bundle $T^{*}Q$ and suppose that $Q$ is locally diffeomorphic to $\mathbb{R}^n.$ Of course, this
would be the case because we are performing local computations.

\[
\xymatrix{ \mathbb{R}^n\times \mathbb{R}^n
\ar[dd]^{\pi_{\mathbb{R}^n}} \ar[rrr]^{X_d}&   & &T(\mathbb{R}^n\times \mathbb{R}^n)\ar[dd]^{T\pi_{\mathbb{R}^n}}\\
  &  & &\\
 \mathbb{R}^n\ar@/^2pc/[uu]^{\gamma=DS_d}\ar[rrr]^{X_d^{DS_d}}&  & & T\mathbb{R}^n}
\]

\begin{definition}
We define the projected vector field $X_d^{DS_d}:\mathbb{R}^n \rightarrow T\mathbb{R}^n$ depicted in the diagram above,
in the following way
\begin{equation}\label{comp}
 X_d^{DS_d}=T_{\pi_{\mathbb{R}^n}}\circ X_d\circ DS_d
\end{equation}
so that the diagram is commutative.
\end{definition}

\begin{theorem}[The discrete Hamilton--Jacobi theorem]
The discrete vector fields $X_d$ and $X_d^{DS_d}$ are $DS_d$-related 
if the following equation is satisfied
\begin{equation}\label{midhje}
 D_2H^{+}(q_j,p_{j+1})D_{q_j}\gamma_j(q_{j+1})=D_1H^{+}(q_{j},p_{j+1})
\end{equation}
where $\gamma=DS_d$.
If the two discrete vector fields are $DS_d$-related or $\gamma$-related, equivalently, we can say that $DS_d$ maps integral curves of $X_d^{DS_d}$ into solutions
of $X_d$, that is, solutions of the Hamilton equations.

\begin{proof}
In order for \eqref{comp} to be satisfied, we have to perform the calculation
\begin{equation}
 T_{\gamma} X_d^{DS_d}=X_d,\qquad \gamma=DS_d
\end{equation}
We look for a section $\gamma=\{\gamma_j(q_{j+1}), \forall j=1,\dots,N-1\}$ such that
{\begin{footnotesize}
\begin{align}
 T_{\gamma} X_d^{DS_d}&=T_{\gamma} \left(\sum_{j=1}^{N-1}  D_2H^{+}(q_j,p_{j+1})\frac{\partial}{\partial q_{j+1}}\right)= \sum_{j=1}^{N-1} D_2H^{+}(q_j,p_{j+1})T_{\gamma} \left(\frac{\partial}{\partial q_{j+1}}\right)\nonumber\\
&=\sum_{j=1}^{N-1} D_2H^{+}(q_j,p_{j+1})\left(\frac{\partial}{\partial q_{j+1}}+D_{q_{j+1}}\gamma_j(q_{j+1})\frac{\partial}{\partial p_{j+1}}\right)
\end{align}
\end{footnotesize}}
which has to be equal to \eqref{dhvf}. From this, we obtain the expression
\begin{equation}
 D_2H^{+}(q_j,p_{j+1})D_{q_{j+1}}\gamma_j(q_{j+1})-D_1H^{+}(q_j,p_{j+1})=0
\end{equation}
that is another way of describing the {\it discrete Hamilton--Jacobi equation}
\begin{equation}\label{anotherhjen}
 D_2H^{+}_dD\gamma=D_1H^{+}_d.
\end{equation}
\end{proof}
\end{theorem}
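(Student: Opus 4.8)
The plan is to unwind the definition of $DS_d$-relatedness and compare the two sides of the defining identity componentwise. Applying the definition of $F$-related vector fields to $F=\gamma=DS_d$, the fields $X_d$ and $X_d^{DS_d}$ are $DS_d$-related precisely when $X_d(\gamma(q))=d\gamma(q)\bigl(X_d^{DS_d}(q)\bigr)$ for every $q$. Geometrically this says that $X_d$, restricted to the image of the section $\gamma$, is tangent to the graph $\{(q_j,DS_d^j(q_j))\}$ inside $\mathbb{R}^n\times\mathbb{R}^n$, so the whole statement reduces to a tangency condition for $X_d$ along the image of $DS_d$.

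First I would make the projected field explicit. By its defining formula \eqref{comp}, $X_d^{DS_d}=T_{\pi_{\mathbb{R}^n}}\circ X_d\circ DS_d$, and since $T\pi_{\mathbb{R}^n}$ annihilates the $\partial/\partial p$ components, only the horizontal part of \eqref{dhvf} survives, so that
\[
X_d^{DS_d}=\sum_{j=1}^{N-1} D_2H^{+}(q_j,p_{j+1})\,\frac{\partial}{\partial q_{j+1}},
\]
evaluated along the section with $p_{j+1}=DS_d^{j+1}(q_{j+1})$. Next I would lift this field through $\gamma$: since $\gamma$ sends $q_{j+1}$ to $(q_{j+1},\gamma_j(q_{j+1}))$, the image of $\partial/\partial q_{j+1}$ under $d\gamma$ is $\partial/\partial q_{j+1}+D_{q_{j+1}}\gamma_j(q_{j+1})\,\partial/\partial p_{j+1}$, whence
\[
d\gamma\bigl(X_d^{DS_d}\bigr)=\sum_{j=1}^{N-1} D_2H^{+}(q_j,p_{j+1})\Bigl(\frac{\partial}{\partial q_{j+1}}+D_{q_{j+1}}\gamma_j(q_{j+1})\,\frac{\partial}{\partial p_{j+1}}\Bigr).
\]

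Finally I would equate this with $X_d$ from \eqref{dhvf} and match coefficients. The $\partial/\partial q_{j+1}$ terms agree identically, so relatedness is governed entirely by the momentum components; equating the coefficient of $\partial/\partial p$ forces $D_2H^{+}(q_j,p_{j+1})\,D_{q_{j+1}}\gamma_j(q_{j+1})=D_1H^{+}(q_j,p_{j+1})$, which is exactly \eqref{midhje}. The closing assertion --- that $DS_d$ then carries integral curves of $X_d^{DS_d}$ to integral curves of $X_d$ --- is the standard consequence of $F$-relatedness of vector fields and requires no separate argument.

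The main obstacle I anticipate is bookkeeping rather than analysis: the discrete momenta carry the two adjacent indices $p_j$ and $p_{j+1}$, and the projection and lift shuffle between $\partial/\partial p_j$ and $\partial/\partial p_{j+1}$, so one must be scrupulous about which slot each derivative lands in and confirm that the index decorating $D_{q}\gamma$ that actually emerges from the lift matches the one appearing in \eqref{midhje}. Keeping the identifications $T^{*}Q\simeq\mathbb{R}^n\times\mathbb{R}^n$ and $p_{j+1}=DS_d^{j+1}(q_{j+1})$ consistent throughout the comparison is the only real care the argument demands.
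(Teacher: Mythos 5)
Your proposal is correct and follows essentially the same route as the paper: you express $X_d^{DS_d}$ via \eqref{comp}, push it forward through the section $\gamma=DS_d$ (the paper's $T_\gamma X_d^{DS_d}$ is your $d\gamma(X_d^{DS_d})$), and equate the result with \eqref{dhvf} to read off \eqref{midhje} from the momentum components. Your closing remark about the $p_j$ versus $p_{j+1}$ index bookkeeping is apt --- the paper's own proof glosses over exactly that point --- but it does not change the substance of the argument.
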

Note: The left discrete formulation leads to equivalent results.

\begin{proposition}
The discrete flow formulation and the discrete vector field approach for the discrete, geometric Hamilton--Jacobi equation are equivalent.
\end{proposition}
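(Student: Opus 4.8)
The plan is to prove the Proposition not by grinding the two difference equations \eqref{hjen} and \eqref{anotherhjen} against each other, but by reducing the claim to the two discrete Hamilton--Jacobi theorems already established and then showing that their hypotheses coincide. Each theorem is a \emph{relatedness} statement: the flow theorem asserts that $\mathcal{F}_d^{H}$ and $(\mathcal{F}_d^{H})^{DS_d}$ are $DS_d$-related exactly when \eqref{hjen} holds, while the vector-field theorem asserts that $X_d$ and $X_d^{DS_d}$ are $DS_d$-related exactly when \eqref{anotherhjen} holds. Thus it suffices to prove the single equivalence
\[
[\,\mathcal{F}_d^{H}\ DS_d\text{-related to}\ (\mathcal{F}_d^{H})^{DS_d}\,]\ \Longleftrightarrow\ [\,X_d\ DS_d\text{-related to}\ X_d^{DS_d}\,],
\]
after which the equivalence of \eqref{hjen} and \eqref{anotherhjen}, and hence of the two approaches, is immediate.

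The central device is a bridge between a discrete flow and its discrete vector field. The discrete Hamiltonian flow $\mathcal{F}_d^{H}$ is generated by the discrete Hamiltonian vector field $X_d$ of \eqref{dhvf} in the Cresson--Pierret sense: the coefficients of $X_d$ along $\partial/\partial q_{j+1}$ and $\partial/\partial p_j$ are precisely the right-hand sides $D_2H^{+}$ and $D_1H^{+}$ of the one-step map \eqref{rdhe} that defines $\mathcal{F}_d^{H}$. I would therefore establish the discrete analogue of the classical fact that a map intertwines two flows if and only if it relates their generators. Concretely, evaluating the relatedness identity $X_d(\gamma(q_j))=d\gamma\,(X_d^{DS_d})$ at the section $\gamma=DS_d$ and reading off the base and fibre components should reproduce, step by step, the intertwining condition $\mathcal{F}_d^{H}(q_j,DS_d^{j}(q_j))=(q_{j+1},DS_d^{j+1}(q_{j+1}))$ of \eqref{comp22}; conversely, differentiating that condition at the base point should return the vector-field identity.

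To make the two pictures meet computationally, I would run the section identity \eqref{intro}, $p_{j+1}=DS_d^{j+1}(q_{j+1})$, through \eqref{hje2} to recover \eqref{hjen}, and then differentiate \eqref{hjen} back: differentiating in $q_{j+1}$ with $q_j$ held fixed collapses, after cancellation of the $DS_d^{j+1}$ terms, to $q_{j+1}=D_2H^{+}(q_j,p_{j+1})$, while differentiating in $q_j$ yields $p_j=D_1H^{+}(q_j,p_{j+1})$ --- exactly the two components of $X_d$. Feeding these components back into the relatedness identity is the step I would use to recover \eqref{anotherhjen}, closing the loop in the direction flow $\Rightarrow$ vector field; the reverse direction follows by iterating the related vector field along the single-step map.

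The main obstacle is that the two equations sit on different footings: \eqref{hjen} couples the time levels $j$ and $j+1$ and is an \emph{integrated} identity in $q_{j+1}$, whereas \eqref{anotherhjen} is a single-level \emph{differential} relation. Passing cleanly between them forces one to fix once and for all which of $q_j,q_{j+1},p_{j+1}$ are independent --- I would treat $(q_j,p_{j+1})$ as the free type-2 data and regard $q_{j+1},p_j$ as determined by \eqref{rdhe} --- and to invoke nondegeneracy (invertibility of $D_1D_2L_d$, equivalently of the Hessian of $S_d$) in order to divide when extracting the pointwise statement; this is exactly the bookkeeping on which the reproduction of \eqref{anotherhjen} hinges. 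The genuinely delicate point, with no shortcut, is the discrete flow--vector-field correspondence itself: a Cresson--Pierret discrete vector field is not the infinitesimal generator of a one-parameter group, so the automatic smooth-category transfer of relatedness must be re-proved for the single-step map $\mathcal{F}_d^{H}$. Verifying that $X_d$ is tangent to $\mathcal{F}_d^{H}$ in the appropriate discrete sense, and that $DS_d$-relatedness is preserved under this correspondence, is where the real work will lie.
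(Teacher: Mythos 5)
Your proposal is correct in substance, and its computational engine coincides with the paper's own proof: the paper disposes of the Proposition in three lines by taking the total derivative of \eqref{hjen} and substituting the right discrete Hamilton equation $q_{j+1}=D_2H^{+}(q_j,p_{j+1})$ from \eqref{rdhe}, thereby reproducing \eqref{anotherhjen}; your step ``differentiate \eqref{hjen} in $q_{j+1}$ and in $q_j$, recover the two components of $X_d$, feed them back into the relatedness identity'' is that same calculation. What you do differently is the superstructure: the paper treats the equivalence as a direct identity between the two equations, whereas you lift it to an equivalence of relatedness conditions and route it through a discrete flow--generator correspondence. Two remarks on that. First, the bridge lemma you flag as ``where the real work will lie'' is, in this discrete setting, nearly tautological rather than delicate: the coefficients of $X_d$ in \eqref{dhvf} along $\partial/\partial q_{j+1}$ and $\partial/\partial p_j$ are by construction exactly the right-hand sides $D_2H^{+}$ and $D_1H^{+}$ that implicitly define the one-step map $\mathcal{F}_d^H$, so ``$X_d$ is $DS_d$-related to $X_d^{DS_d}$'' and ``$\mathcal{F}_d^H$ intertwines with $(\mathcal{F}_d^H)^{DS_d}$ along $DS_d$'' as in \eqref{comp22} are the same system of equations read twice; no discrete tangency argument is needed, so the extra machinery buys little over the paper's direct computation, though it does make the Proposition's content (relatedness, not just equations) explicit. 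Second, your nondegeneracy bookkeeping is a genuine improvement on the paper: differentiating \eqref{hjen} in $q_{j+1}$ yields $D^{2}S_d^{j+1}\,\bigl(D_2H^{+}-q_{j+1}\bigr)=0$, and one must either invert $D^{2}S_d^{j+1}$ (your route) or impose the Hamilton equation (the paper's route) to extract the pointwise statement --- the paper never acknowledges this choice. One caveat applies equally to you and to the paper: only the direction \eqref{hjen} $\Rightarrow$ \eqref{anotherhjen} is actually carried out by differentiation; recovering the integrated equation \eqref{hjen} from the differential one \eqref{anotherhjen} requires controlling a step-wise integration constant in $S_d^{j}$, and your closing remark about ``iterating the related vector field along the single-step map'' does not supply that argument, just as the paper's claim of obtaining ``two copies of the same expression'' does not.
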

\begin{proof}
 The proof is straightforward. It consists of taking the total derivative of expression \eqref{hjen}
 and considering $q_{j+1}=D_2H^{+}(q_j,p_{j+1})$ from the right discrete Hamilton equations. As a byproduct
 we obtain two copies of the same expression, that corresponds with \eqref{anotherhjen}.
 
\end{proof}

\section{Applications}

In \cite{Sakamoto} the authors propose two
approximation methods to solve optimal control problems: the Hamiltonian perturbation technique and the stable manifold approach.
Here, we propose the use of discrete Hamilton--Jacobi equations as an alternative and third  method to obtain approximate solutions of optimal control problems.
We can compare the power of our approach by comparising our results with the two proposed approaches in \cite{Sakamoto}.

\begin{definition}
A control problem of  ordinary differential equations is usually given by
\begin{equation}\label{control}
 \dot{q}^i=\Gamma^i(q(t),u(t)),\quad 1\leq i\leq n
\end{equation}
where $\{q^i\}$ are called state variables and $\{u^a\}, 1 \leq a\leq k$ are control functions.
\end{definition}
The optimal control is the following. Given initial and final states $q_0$ and $q_f$, the objective is to find a $C^2$ piecewise
curve $c(t)=(q(t),u(t))$ such that $q(t_0)=q_0$ and $q(t_f)=q_f$, satisfying the control equations and minimizing the functional
\begin{equation*}
 \mathcal{J}(c)=\int_{t_0}^{t_f} \mathbb{L}(q(t),u(t))dt
\end{equation*}
\noindent
for some cost function $\mathbb{L}=\mathbb{L}(q,u)$. 

%
%
%

For a geometrical description, one assumes a fiber bundle structure $\pi:C\rightarrow B$, where $B$ is the configuration manifold
with local coordinates $\{q^i\}$ and $C$ is the bundle of controls with local coordinates $\{q^i,u^a\}$. The ordinary differential
equations in \eqref{control} on $B$ depending on the parameters $u$ can be seen as a vector field $\Gamma$ along the projection map $\pi$
that is, $\Gamma$ is a smooth map $\Gamma:C\rightarrow TB$ such that the following diagram is commutative.


\begin{center}
 \begin{tikzcd}[column sep=huge, row sep=huge]
  &\mathbb{R}  &C\arrow[d, "\pi"] \arrow[r, "\Gamma"]\arrow[l, "\mathbb{L}"] &TB\arrow[dl, "\tau_{B}"]   \\
  & &B &
 \end{tikzcd}
\end{center}

The dynamics is here restricted to a submanifold given the restrictions of the control equations \eqref{control}.

\begin{center}
 \begin{tikzcd}[column sep=huge, row sep=huge]
  &M \arrow[r, hook, "i"]  &TC\arrow[r, "\phi"]\arrow[d, "\tau_{C}"] &\mathbb{R}   \\
  & &C\arrow[ur, "\mathbb{L}"] &
 \end{tikzcd}
\end{center}



So, the optimal control problem $(C,\mathbb{L},\Gamma)$ is associated with the Lagrangian function
$L:TC\rightarrow \mathbb{R}$, where $L=\mathbb{L}\circ \tau_C$ and the constraint submanifold $M$ defined by
\begin{equation}
 M=\{(q^i,u_a,\dot{q}^i,\dot{u}_a)\quad |\quad \dot{q}^i=\Gamma^i(q^i,u_a)\}
\end{equation}
for bundle coordinates $\{q^i,u_a,\dot{q}^i,\dot{u}_a\}$ on $TC$, then $L=L(q^i,u_a,\dot{q}^i,\dot{u}_a)$ and $\mathbb{L}=\mathbb{L}(q^i,u_a)$.
%
%

Let us define a singular Lagrangian $\mathcal{L}:T(C\times \mathbb{R}^n)\rightarrow \mathbb{R}$ in terms of Lagrange multipliers $\lambda_i$ \cite{Arnoldiii},
\begin{equation}
 \mathcal{L}=\mathbb{L}+\lambda_i(\dot{q}^i-\Gamma^{i}(q^i,u_a))
\end{equation}
and the Legendre transformation $FL$ of this Lagrangian

\begin{center}
 \begin{tikzcd}[column sep=huge, row sep=huge]
  &T(C\times \mathbb{R}^n) \arrow[r, "FL"]  \arrow[dr, "FL_1"] &T^{*}(C\times \mathbb{R}^n)   \\
  & &M_1 \arrow[u, hook, "i"] 
 \end{tikzcd}
\end{center}
where $FL_1$ is the restriction of the Legendre transformation to the {\it first-order constraint submanifold} $M_1$.

Now, we apply the Dirac-Bergmann algorithm \cite{dirac} geometrized by M. Gotay and J.M. Nester \cite{GN,GNH}. In
bundle coordinates $(q^i,u_a,\lambda_i,\dot{q}^i, \dot{u}_a, \dot{\lambda}_i)$ on $T^{*}(C\times \mathbb{R}^n)$, the first-order constraint submanifold
$M_1$ is locally defined by the implicit equations
\begin{equation}\label{impeq}
\left(q^i,u_a,\lambda_i,\frac{\partial \mathcal{L}}{\partial q^i}=\lambda_i,\frac{\partial \mathcal{L}}{\partial u_a}=0, \frac{\partial \mathcal{L}}{\partial \lambda_i}=0\right)
\end{equation}
on $T^{*}(C\times \mathbb{R}^n)$.
The definition of the energy function is
\begin{equation}
 E_{\mathcal{L}}=\dot{q}^i\lambda_i-\mathcal{L}=\lambda_i\Gamma^i(q^i,u_a)-\mathbb{L}(q^i,u_a)
\end{equation}

Here, $E_{\mathcal{L}}$ constant along the fibers of $FL_1$ and projects to $M_1$. For this we say that $\mathcal{L}$ is {\it almost regular}.
Hence, the constrained Hamiltonian is
\begin{equation}\label{elh1}
 H_1(q^i,u_a,\lambda_i)=\lambda_i\Gamma^i-\mathbb{L}
\end{equation}
The symplectic form on $T^{*}(C\times \mathbb{R}^n)$ is
\begin{equation}
 \omega_{T^{*}(C\times \mathbb{R}^n)}=dq^i\wedge dp_{q^i}+du_a\wedge dp_{u_a}+d\lambda_i\wedge dp_{\lambda_i}
\end{equation}
and then, its restriction to $M_1$ is
\begin{equation}
 \omega_1=\omega_{C\times \mathbb{R}^n}|_{M_1}=dq^i\wedge d\lambda_i
\end{equation}
and the vector field $X_1$ providing the dynamics on $M_1$ will fulfill
\begin{equation}
 \iota_{X_1}\omega_1=dH_1
\end{equation}
It reads,
{\begin{footnotesize}
\begin{equation*}
 X_1=-\Gamma^i \frac{\partial}{\partial q^i}+\left(\lambda_i\frac{\partial \Gamma}{\partial q^i}-\frac{\partial \mathbb{L}}{\partial q^i}\right)\frac{\partial}{\partial \lambda_i}
\end{equation*}
\end{footnotesize}}
\noindent
from where we obtain restrictions that define the {\it secondary constraint manifold} $M_2$,
\begin{equation}\label{phia}
 \phi_a=\lambda_i\frac{\partial \Gamma^i}{\partial u^a}-\frac{\partial \mathbb{L}}{\partial u^a}=0
\end{equation}
which are called {\it secondary constraints}. Furthermore, the tangency condition $X_1(\phi_a)=0$ provides the regularity condition
we assume for optimal control problems.

\begin{example}[A one dimensional nonlinear control problem]\normalfont
Consider a one dimensional nonlinear control problem \cite{Sakamoto} whose continuous version is
\begin{align}
 \dot{q}&=q-q^3+u,\\
 \mathcal{J}&=\int_{0}^{\infty}\left(\frac{s}{2}q^2+\frac{r}{2}u^2\right)dt
\end{align}
and whose restricted Hamiltonian according to the algorithm described above (but
using the opposite sign criterion in order to retrieve results exposed in \cite{Sakamoto} where they use the positive sign) is
\begin{equation}\label{controlham}
H=p(q-q^3)-\frac{1}{2r}p^2+\frac{s}{2}q^2.
 \end{equation}
The constraint \eqref{phia} is

 \begin{equation}\label{phia}
 \phi_a=\lambda_i\frac{\partial \Gamma^i}{\partial u^a}+\frac{\partial \mathbb{L}}{\partial u^a}=0
\end{equation} 
with the positive sign criterion in \cite{Sakamoto}. For this one dimensional nonlinear control problem,
 \begin{equation}
  \phi_a=p+ru=0
 \end{equation}

and the vector field $\Gamma$ reads
 \begin{equation}
  \Gamma=(q-q^3+u)\frac{\partial}{\partial q}.
 \end{equation}

 In the discrete case, the right discrete Hamiltonian would read
 \begin{equation}\label{disopham}
  H_d^{+}=(q_j-q_j^3)p_{j+1}-\frac{p_{j+1}^2}{2r}+\frac{s}{2}q_j^2
 \end{equation}
\noindent
 So, the associated right discrete Hamilton equations are
 \begin{align}\label{optcontrolrdhe}
  q_{j+1}&=q_j-q_j^3-\frac{1}{r}p_{j+1},\nonumber\\
  p_j&=(1-3q_j^2)p_{j+1}+sq_j.
 \end{align}

As a matter of simplicity let us choose the parameters $r=s=1$, without loss of generalization.
The orbits in the discrete phase space take the form

 \begin{figure}[H]
\centering
\begin{minipage}[l]{6cm}
\includegraphics[width=6cm,height=5cm]{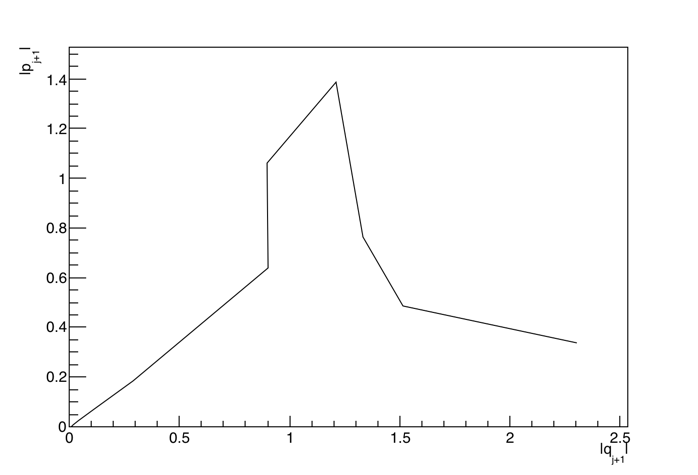}
\end{minipage}
\begin{minipage}[r]{6cm}
\includegraphics[width=6cm,height=5cm]{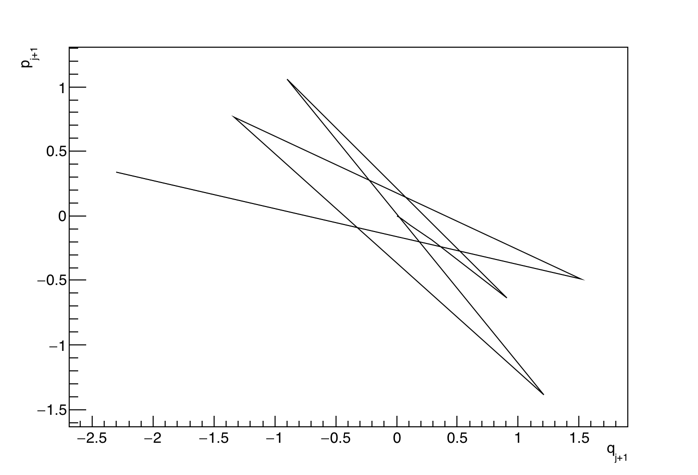}
\end{minipage}
\caption{$|p_{j+1}|$ vs. $|q_{j+1}|$ and $p_{j+1}$ vs. $q_{j+1}$}\label{figures}
\end{figure}

\noindent
for values $q_1=0.00000005$ and $p_1=0$, which is compatible with results given in \cite{Sakamoto} for a continuous version. It is easy to see
 that the curve in \cite{Sakamoto} is an equivalent continuous version of our representation above.
 This could be reenacted in terms of the left discrete Hamiltonian.
 
 \subsubsection*{The discrete flow approach}
 
 To obtain a result of the Hamilton--Jacobi
 equation applied to our optimal control problem, we need to solve the generating function $S_d^j$ or equivalently, $DS_d^j$.
 For this, we use equation \eqref{hjen}, whose solution for this particular example is
 
 \begin{align}\label{ecfordse}
  DS^{j+1}_d=-&q_j^3+q_j-q_{j+1}\pm \nonumber \\
   &\sqrt{q_j^6-2q_j^4+2q_j^3q_{j+1}+2hDS^j_d+2q_j^2-2q_jq_{j+1}+q_{j+1}^2}
 \end{align}
 
 Solving recurrently this expression for initial values $q_{1}=0.00000005,q_2=1.5*10^{-7}$ and $DS^1_d=0$ and a value $h=0.0001$, we obtain a graphic for
 values of $DS_d^j$ versus $q_j$ and for the absolute values of $|DS_d^j|$ versus $|q_j|$.

  \begin{figure}[H]
\centering
\begin{minipage}[l]{6cm}
\includegraphics[width=6cm,height=5cm]{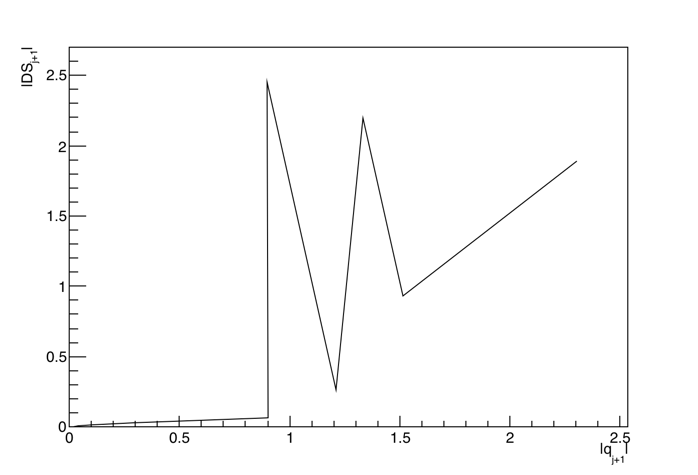}
\end{minipage}
\begin{minipage}[r]{6cm}
\includegraphics[width=6cm,height=5cm]{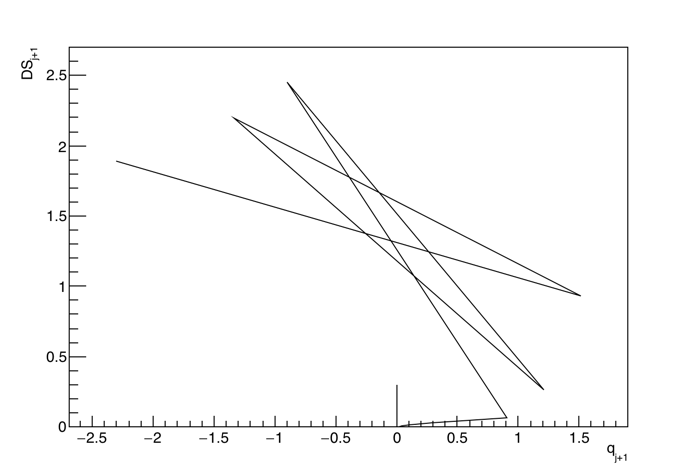}
\end{minipage}
\caption{$|DS^{j+1}_d|$ vs. $|q_{j+1}|$ and $DS^{j+1}_d$ vs. $q_{j+1}$ }
\end{figure}
 
 The graphic on the right hand side shows that the phase space obtained for $(DS^{j+1}_d,q_{j+1})$ where
 $DS^{j+1}_d$ plays the role of $p_{j+1}$ is equivalent to the phase space $(p_{j+1},q_{j+1})$ given by the
 right discrete Hamilton equations \eqref{optcontrolrdhe}. Indeed, the form and variation of the variable are the same but
 there is a displacement along the $y$ axis because of constant terms in \eqref{ecfordse} that produce this shift.
 
 The graphic on the left hand side shows a similar behavior between the absolute value phase space $(|DS^{j+1}_d|,|q_{j+1}|)$
 and the absolute value phase space  $(|p_{j+1}|,|q_{j+1}|)$ from \eqref{optcontrolrdhe}. Indeed, there is a linear growth of $|DS^{j+1}_d|$ and $|p_{j+1}|$ 
 between values $|q_{j+1}|=\{0,0.9\}$ and a peak around $|q_{j+1}|=1$. The discordance between both graphics is rooted
 in the $y$ axis shift commented for the case on right hand side.
 
 This means that although it is evident that $DS^{j+1}_d$ obtained from equation \eqref{ecfordse} by \eqref{hjen} and $p_{j+1}$ are
 equivalent, given the representations $DS^{j+1}_d$ vs. $q_{j+1}$ and $p_{j+1}$ vs. $q_{j+1}$, the phase shift in the $y$ axis
 is quite visible in the absolute value phase space.
 
 The next subsection shows that the results obtained through the discrete vector field approach are more accurate and there is
 no axis shift.

\subsubsection*{The discrete vector field approach}

To apply the discrete vector field approach in our optimal control problem, we need to impose condition \eqref{comp} for a vector
field that reads

\begin{equation}
 X_d=\sum_{j=1}^{N-1}\left(q_j-q_j^3-p_{j+1}\right)\frac{\partial}{\partial q_{j+1}}+\left((1-3q_j^2)p_{j+1}+q_j\right)\frac{\partial}{\partial p_j}
\end{equation}
\noindent
and whose projection is

\begin{equation}
 X_d^{\gamma}=\sum_{j=1}^{N-1}\left(q_j-q_j^3-p_{j+1}\right)\frac{\partial}{\partial q_{j+1}}
\end{equation}
\noindent
We choose a section $\gamma=\{\gamma_j(q_{j+1}), \forall j=1,\dots,N-1\}$ and through \eqref{comp}, we obtain the following equation,
\begin{equation}
 \left(q_j-q_j^3-p_{j+1}\right)\frac{\partial \gamma_j(q_{j+1})}{\partial q_{j+1}}=\left(1-3q_j^2\right)p_{j+1}+q_j
\end{equation}

whose solution is

\begin{equation}\label{ginit}
 \gamma_{j+1}(q_{j+2})=\frac{-\left(\gamma_j(q_{j+1})q_j^2-\gamma_j(q_{j+1})+q_{j+1}\right)q_j}{\gamma_j(q_{j+1})+q_{j+1}-3q_j^2q_{j+1}}
\end{equation}

Solving this expression by imposing initial values $\gamma_{1}=0$, $q_1=0.00000005$ and $q_{2}=0.00000005$, we obtain the following values
if we represent $|\gamma_{j+1}(x_{j+2})|$ vs. $|q_{j+1}|$ and $\gamma_{j+1}(q_{j+2})$ vs. $q_{j+1}$, we have

\medskip

  \begin{figure}[H]
\centering
\begin{minipage}[l]{6cm}
\includegraphics[width=6cm,height=5cm]{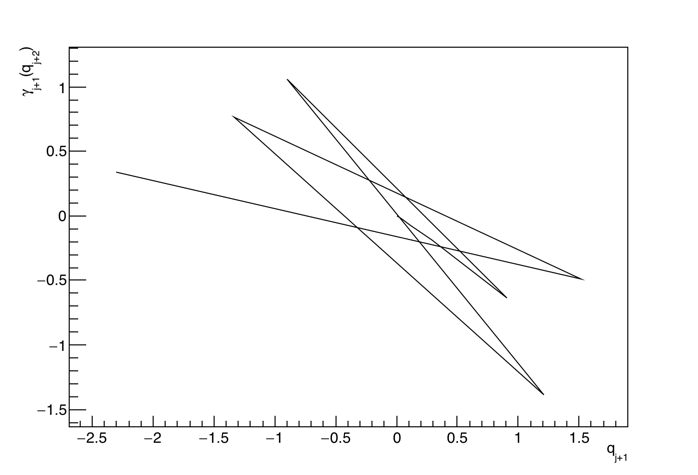}
\end{minipage}
\begin{minipage}[r]{6cm}
\includegraphics[width=6cm,height=5cm]{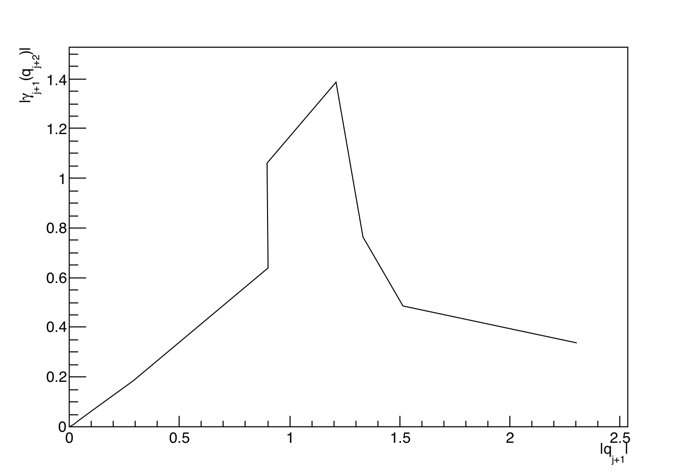}
\end{minipage}
\caption{$|\gamma_{j+1}|$ vs. $|q_{j+1}|$ and $\gamma_{j+1}$ vs. $q_{j+1}$.}\label{figures}
\end{figure}

From these graphics, we can clearly see that there is a good match between the results obtained for $\gamma_{j+1}$ playing the
role of the momenta $p_{j+1}$ and the momenta themselves $p_{j+1}$ of the phase space \eqref{optcontrolrdhe}. There exists no phase shift in the $y$ axis as it happened in
the discrete flow interpretation.

\subsubsection*{Comparion of methods}

From the previous graphics, it is clear that the discrete vector field interpretation seems more accurante than
the discrete flow interpretation and the discrete generating function formula \eqref{hjen}.
To see the accuracy of the discrete Hamiltonian vector field approach, we represent the matching between $\gamma_j(q_{j+1})$
representing the role of $|p_{j+1}|$ and $|p_{j+1}|$ from \eqref{optcontrolrdhe}.

 \begin{figure}[H]
 \centering
 \includegraphics[width=90mm]{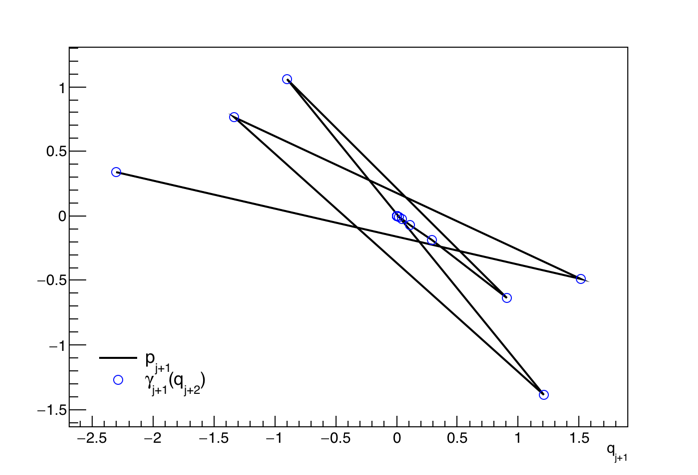}
 \caption{$\gamma_{j+1}(q_{j+2})$ and $p_{j+1}$ vs. $q_{j+1}$}
\end{figure}

\section{Conclusions}

In this paper we have proposed two alternative ways of solving a discrete Hamiltonian problem through two different geometric
interpretations.
The first approach consists of reinterpreting former results available in the literature of discrete Mechanics, by
their geometric understanding based on projected flows and the existence of a generating function whose first-order derivative is
a Lagrangian submanifold of the discrete phase space.
The second approach consists of understanding the discrete dynamics in terms of a discrete vector field whose integral
curves are the discrete Hamilton equations. We propose a geometric interpretation by a projected discrete vector field which
composed with a Lagrangian submanifold of the discrete phase space provides the dynamics of the complete discrete Hamiltonian vector field.
For this matter, we have constructed a discrete Hamiltonian vector field, whose interpretation in the discrete realm is not straightforward.
As a byproduct, we obtain two different discrete Hamilton--Jacobi equations. From the first approach we retrieve the discrete Hamilton--Jacobi
equation existing in the literature. From the second, we obtain a different Hamilton--Jacobi equation which is proven to be equivalent
to the first.
An optimal control example compares the accuracy of the two approaches. It is evident that our interpretation
in terms of discrete vector fields is more accurate than former theories of discrete Mechanics. Evidency is given through
numerical computation and graphic results.
In this way, this manuscript provides an alternative way of obtaining the momenta of a dynamical system through a geometric and
discrete Hamilton--Jacobi theory founded on discrete Hamiltonian vector fields.

\end{example}

\newpage

\section*{Appendix A}

The left discrete action is
\begin{equation}\label{leftHam}
 S_d^{j+1}(q_{j+1})-S_d^{j}(q_j)=-p_jq_j-H_d^{-}(p_j,q_{j+1}).
\end{equation}
If we derivate with respect to $q_j$, we have that $p_j=DS_d^j(q_j)$. Introducing this in the expression,
we obtain the {\it left discrete Hamilton--Jacobi equation}.
\begin{equation}\label{leftdishje}
 S_d^{j+1}(q_{j+1})-S_d^{j}(q_j)+DS_d^{j}(q_j)q_j+H_d^{-}(DS_d^{j}(q_j),q_{j+1})=0
\end{equation}
The left discrete Hamilton--Jacobi equation is equivalent to the right discrete Hamilton--Jacobi equation \eqref{rdhe}.
Their equivalence gives us the relationship between the right and left Hamiltonians,
\begin{equation}\label{eqv}
 H_d^{-}(p_j,q_{j+1})+p_jq_j=H_d^{+}(q_j,p_{j+1})-p_{j+1}q_{j+1}
\end{equation}
The discrete flow interpretation can be reenacted for the left formalism.

\section*{Appendix B}

The left discrete vector field is constructed with the left Hamilton equations \eqref{ldhe}.
In this way,
\begin{equation}
 X_d^{-}=\sum_{i=1}^{N-1}\left(-D_2H_d^{-}(q_{j+1},p_j)\frac{\partial}{\partial q_j}-D_1H_d^{-}(q_{j+1},p_j)\frac{\partial}{\partial p_{j+1}}\right)
\end{equation}
Equivalently, the Hamilton--Jacobi theory can be interpreted through the left vector field as performed in \eqref{comp} for the right case.
The projected vector field is 
\begin{equation}
 X_d^{-}=-\sum_{j=1}^{N-1}D_2H_d^{-}(q_{j+1},p_j)\frac{\partial}{\partial q_j}
\end{equation}
Choosing a section $\gamma=\{\gamma_j(q_{j+1}), j=1,\dots,N-1\}$ and imposing \eqref{comp}, we arrive at

\begin{equation}
 D_2H_d^{-}(q_{j+1},p_j)\frac{\partial \gamma_{j+1}}{\partial q_j}=D_1H_d^{-}(q_{j+1},p_j)
\end{equation}
\noindent

This Hamilton-Jacobi equation is equivalent to the right discrete Hamilton--Jacobi equation in \eqref{midhje}. Furthermore,
this equation can also be obtained through the left discrete flow interpretation in terms of generating functions.

\section*{Acknowledgements}
This work has been partially supported by MINECO MTM 2013-42-870-P and
the ICMAT Severo Ochoa project SEV-2011-0087.

\end{document}